\newcommand{\beq}{\begin{equation}}
\newcommand{\eeq}{\end{equation}}
\newcommand{\beqs}{\begin{equation*}}
\newcommand{\eeqs}{\end{equation*}}
\newcommand{\bal}{\begin{align}}
\newcommand{\eal}{\end{align}}
\newcommand{\bals}{\begin{align*}}
\newcommand{\eals}{\end{align*}}
\newcommand{\x}{\mathbf{x}}
\newcommand{\set}[1]{\mathbb{#1}}
\newcommand{\R}{\set{R}}
\newcommand{\Z}{\set{Z}}
\newcommand{\abs}[1]{\left\vert#1\right\vert}
\def\1{{\mathsf 1}}
\newtheorem{thm}{Theorem}
\newtheorem{lem}{Lemma}
\theoremstyle{definition}
\numberwithin{equation}{section}
\theoremstyle{remark}
\newtheorem{remark}{Remark}
\def\dotuline{\bgroup
  \ifdim\ULdepth=\maxdimen  
   \settodepth\ULdepth{(j}\advance\ULdepth.4pt\fi
  \markoverwith{\begingroup
  \advance\ULdepth0.08ex
  \lower\ULdepth\hbox{\kern.15em .\kern.1em}%
  \endgroup}\ULon}
\def\dashuline{\bgroup
  \ifdim\ULdepth=\maxdimen  
   \settodepth\ULdepth{(j}\advance\ULdepth.4pt\fi
  \markoverwith{\kern.15em
  \vtop{\kern\ULdepth \hrule width .3em}%
  \kern.15em}\ULon}
\pgfplotsset{compat=1.14}
\begin{document}

\setlength{\columnsep}{5pt}

\title{Landscape Theory for Schr\"odinger Operators with General Hopping Terms on a Finite Lattice}
\author[1]{John Buhl, Isaac Cinzori, Isabella Ginnett, Mark Landry\footnote{Corresponding author:\href{mailto: landrym5@msu.edu}{ landrym5@msu.edu}}, Yikang Li, Xingyan Liu \\ of Michigan State University\footnote{This research was partially supported by NSF DMS-1758326 and DMS-1846114 grants.}}

\date{}
\maketitle

\begin{abstract}
    Findings by M. L. Lyra, S. Mayboroda and M. Filoche relate invertibility and positivity of a class of discrete Schr\"odinger matrices with the existence of the ``Landscape Function,'' which provides an upper bound on all eigenvectors simultaneously. Their argument is based on the variational principles. We consider an alternative method of proving these results, based on the power series expansion of matrices, and demonstrate that it naturally extends the original findings to the case of long range operators.
\end{abstract}

\section{Introduction}
The findings of Lyra, Mayboroda, and Filoche \cite{LMF15} that we are concerned with are summarized below. Their paper examined the following Schr\"odinger matrix $H$, a matrix form of the equation $[-\Delta + V(\vec{x})]\Psi(\vec{x}) = E \Psi(\vec{x})$, on a finite, discrete lattice with a hopping distance of one:
\begin{align}\label{eq:land}
   H=\begin{pmatrix}
	v_1 & -1 & 0  & \cdots &  0 \\
	-1 & v_2 & \ddots &\ddots   &  \vdots \\
	0 & \ddots& \ddots & \ddots  & 0 \\
	\vdots  & \ddots & \ddots & v_{n-1} & -1 \\
	0  & \cdots & 0 & -1 & v_n
\end{pmatrix} \ \ v_j \ge 2, \ j=1,\cdots,n.
\end{align}
It was shown in the original paper \cite{LMF15} that there exists a unique solution $\vec{u}$ to the ``Landscape Function'' equation
\begin{align}
\label{landscape_equation}
\ H\vec{u}=\begin{pmatrix} 1\\\vdots\\1\end{pmatrix}.
\end{align}
The solution vector $\vec{u}$ is also called the ``Landscape Function," and was shown to have the property that for any eigenvector $\vec{x}$ and corresponding eigenvalue $\lambda$ of the matrix, so that $H\vec{x} = \lambda\vec{x}$, and the above conditions on $v_j$, we have, for all $j=1,\cdots,n$
\begin{align*}
\frac{|x_j|}{\max\limits_{1\le k\le n}|x_k|}\le \lambda u_j
\end{align*}
The above result is what we will call throughout the ``Landscape---after what is in physical practice a landscape, $\vec{u}$---Theory." In applied situations, the above result shows that every eigenmode of the given physical system is bounded in some way by the ``Landscape Function" $\vec{u}$. Results of this kind are strongly related to Anderson localization: see, for example, \cite{WZ} in the case of the nearest neighbor hopping. We were motivated by these initial results to devise an extension to more complicated (long range) versions of the Schr\"odinger matrix,  such as the next nearest neighbor hopping in the extended Harper model. For more information, see \cite{AJM} and references therein.

Before we move to the results, we want to make a few more comments about the original Landscape Theory. Anderson localization \cite{A58} in a disordered medium is one of the most important and popular topics in condensed matter physics. A new concept, the \emph{landscape function} for an elliptic differential operator $L$, was first introduced in 2012 by Filoche and Mayboroda \cite{FM12}, and was shown to be extremely adept at predicting the location of regions of low energy eigenstates of $L$. The concept of the \emph{landscape function} was generalized from the continuous case to the discrete case in \cite{LMF15}, which our current paper is based on. The Landscape Theory was further developed in mathematics \cite{S17,ADFJM}, as well as in theoretical and experimental physics \cite{F+17,L+16}. We refer readers to the above papers and the references therein for more background and details of the Landscape Theory.

We present our results in two sections. The first covers an extension of a slightly weakened version of the initial Landscape Theory to a ``Long Range" Schr\"odinger matrix. We then provide the findings necessary to strengthen our theory so that it is a complete extension of the original results. It must also be highlighted that our extensions are built off a new method of proof of the Landscape Theory --- which also holds for the original model with hopping distance one. This method is the power series expansion of matrices. The rest of the paper is devoted to extending and proving our results.

\section{Strict Hamiltonian Potential Inequality}\label{sec:strict}
Let the Long Range Schr\"odinger matrix be given by $H=-H_0+V$, where, for 
$n\in\mathbb{N}$ and $n\ge 2$,
\begin{align}\label{eq:V}
   V=\begin{pmatrix}
	v_1 & 0 & 0  & \cdots &  0 \\
	0 & v_2 & \ddots &\ddots   &  \vdots \\
	0 & \ddots& \ddots & \ddots  & 0 \\
	\vdots  & \ddots & \ddots & v_{n-1} & 0 \\
	0  & \cdots & 0 & 0 & v_n
\end{pmatrix}
\end{align}
and
\begin{align}\label{eq:H0}
   H_0=\begin{pmatrix}
	0 & a_1 & a_2 & \cdots &   \cdots &  a_{n-2} & a_{n-1} \\
	a_1 & 0 & a_1 & \ddots & \ddots & \ddots & a_{n-2} \\
	a_2  & a_1 & 0 & \ddots & \ddots  & \ddots & \vdots  \\
	\vdots & \ddots & \ddots & \ddots& \ddots&\ddots&\vdots \\
	\vdots & \ddots& \ddots & \ddots & 0  &  a_1 & a_2 \\
	a_{n-2}  &\ddots &  \ddots  & \ddots & a_1 &  0 & a_1 \\ 
	a_{n-1}& a_{n-2} & \cdots  & \cdots & a_2 & a_1 & 0
\end{pmatrix}, \ a_i\ge0, \ i=1,\cdots,n-1.
\end{align}

\begin{thm}[Landscape Theory for general hopping matrix on a finite lattice]\label{thm:genLST}
Let $H$ be given as above. We consider the eigenvector $\vec{x}$ of $H$ with corresponding eigenvalue $\lambda$, so that $H\vec{x}=\lambda \vec{x}$. Assume that
\begin{align}\label{eq:cond-v}
    v_j>2\sum_{i=1}^{n-1}a_i, \ \ j=1,\cdots, n
\end{align}
Then, the solution to the Landscape Function equation, $\vec u\in \R^n$, exists, satisfying $H\vec{u}=\vec{1}$.\footnote{We use the notation $\vec1=(1,\cdots,1)^T$.} Further, for all $j=1,\cdots,n$
\begin{align}\label{eq:LST}
   \frac{|x_j|}{\max\limits_{1\le k\le n}|x_k|}\le \lambda u_j, 
\end{align}
\end{thm}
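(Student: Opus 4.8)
The plan is to realize $H^{-1}$ explicitly as a convergent matrix power series (Neumann series) and to exploit the fact that \eqref{eq:cond-v} forces this series to have nonnegative entries. Write $H=V-H_0=V(I-V^{-1}H_0)$, which is legitimate since \eqref{eq:cond-v} makes every $v_j$ strictly positive, so $V$ is invertible. The diagonal dominance hypothesis is exactly what controls $V^{-1}H_0$: the $j$-th absolute row sum of $H_0$ equals $\sum_{k\neq j}a_{|j-k|}=\sum_{i=1}^{j-1}a_i+\sum_{i=1}^{n-j}a_i\le 2\sum_{i=1}^{n-1}a_i<v_j$, so $\norm{V^{-1}H_0}_\infty=\max_j v_j^{-1}\sum_{k\neq j}a_{|j-k|}<1$. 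Hence $I-V^{-1}H_0$ is invertible, the Neumann series $\sum_{m\ge0}(V^{-1}H_0)^m$ converges, and $H$ is invertible with
\[
H^{-1}=\left(\sum_{m=0}^\infty (V^{-1}H_0)^m\right)V^{-1}.
\]
In particular $\vec u=H^{-1}\vec1$ exists and is unique, settling the first claim.

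Next I would record the two structural facts that drive the estimate. First, since $a_i\ge0$ and $v_j>0$, the matrix $V^{-1}H_0$ is entrywise nonnegative; therefore each power $(V^{-1}H_0)^m$ is nonnegative, and multiplying by the nonnegative diagonal $V^{-1}$ shows $(H^{-1})_{jk}\ge0$ for all $j,k$. The $m=0$ term already gives $(H^{-1})_{jj}\ge v_j^{-1}>0$, so $u_j=\sum_k (H^{-1})_{jk}\ge (H^{-1})_{jj}>0$; the landscape function is strictly positive, which is what makes \eqref{eq:LST} meaningful. Second, $H$ is real symmetric (both $V$ and $H_0$ are symmetric) and strictly diagonally dominant with positive diagonal, so by Gershgorin every eigenvalue lies in $\bigcup_j \big(v_j-\sum_{k\neq j}a_{|j-k|},\,v_j+\sum_{k\neq j}a_{|j-k|}\big)\subset(0,\infty)$. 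Thus $\lambda>0$, and in particular $0$ is not an eigenvalue, consistent with invertibility.

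Finally I would close the argument componentwise. Because $H$ is invertible, the eigen-equation $H\vec x=\lambda\vec x$ is equivalent to $\vec x=\lambda H^{-1}\vec x$, i.e.\ $x_j=\lambda\sum_k (H^{-1})_{jk}x_k$ for each $j$. Taking absolute values and using $\lambda>0$ together with the entrywise nonnegativity of $H^{-1}$ gives
\[
|x_j|\le \lambda\sum_k (H^{-1})_{jk}|x_k|\le \lambda\Big(\max_{1\le k\le n}|x_k|\Big)\sum_k (H^{-1})_{jk}=\lambda\Big(\max_{1\le k\le n}|x_k|\Big)u_j,
\]
where the last equality uses $\sum_k (H^{-1})_{jk}=(H^{-1}\vec1)_j=u_j$. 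Dividing by $\max_k|x_k|$ yields \eqref{eq:LST}.

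I expect the main obstacle to be not any single computation but the verification that \eqref{eq:cond-v} simultaneously delivers both ingredients the proof needs: convergence of the Neumann series (so that $H^{-1}$ exists) and entrywise nonnegativity of $H^{-1}$ (so that the triangle inequality is saturated and the maximum factors out with row sums collapsing to $u_j$). The nonnegativity is what genuinely distinguishes this bound from a generic operator-norm estimate, and the long-range Toeplitz structure of $H_0$ enters only through the crude but sufficient row-sum bound $\sum_{k\neq j}a_{|j-k|}\le 2\sum_i a_i$; tracking that bound across all the off-diagonal bands, rather than just nearest-neighbor terms, is the one place where the extension beyond \cite{LMF15} must be handled with care.
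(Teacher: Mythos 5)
Your proposal is correct and follows essentially the same route as the paper: write $H=V(I-V^{-1}H_0)$, invert via the Neumann series, observe that every term of the series is entrywise nonnegative so that $G_{jk}\ge 0$ and $u_j>0$, note $\lambda>0$ from symmetry and strict diagonal dominance, and finish with the componentwise triangle-inequality estimate. The only divergence is technical: you certify $\|V^{-1}H_0\|<1$ by the maximum-row-sum ($\ell^\infty$) norm, while the paper bounds the spectral norm of $H_0$ by $2\sum_{i=1}^{n-1}a_i$ through a decomposition into shift matrices $R_i,L_i$; both are valid, and your variant even gives a slightly cleaner proof of strict positivity of $u_j$ via the $m=0$ term $(H^{-1})_{jj}\ge v_j^{-1}>0$.
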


\begin{remark}\label{rmk:H0}
If $a_1=1$ and $a_2=\cdots=a_{n-1}=0$, then Theorem \ref{thm:genLST} under condition \eqref{eq:cond-v} gives a weakened version of the original landscape theory. It is also easy to check that if $a_1=\cdots=a_{n-1}=0$, Theorem \ref{thm:genLST} holds trivially for the diagonal matrix $H=V$. The stronger version of Theorem \ref{thm:genLST}, with a soft inequality in \eqref{eq:cond-v}, also holds true. We will discuss that in Section \ref{sec:soft}. 
\end{remark} 

The outline of the proof will be similar to the original Landscape Theory: (i) the existence of the inverse and the Landscape Function solution; (ii) the positivity of the inverse and Landscape Function. However, we use an alternative proof, built off the power series expansion of the pertinent matrices, to show these results.

\begin{lem}\label{lem:inverse}
 If $ v_j$ satisfies \eqref{eq:cond-v}, then $H$ is invertible. As a consequence, there is always a vector $\vec u\in \R^n$ satisfying $H\vec u=\vec1$, with the explicit expression
\begin{align}
\label{landveccomp}
    u_j=\sum_{k=1}^nG_{jk},
\end{align}
  where $G_{ij}=H^{-1}(i,j)$ is the $(i,j)th$ entry of the inverse of $H$.
\end{lem}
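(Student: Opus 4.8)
The plan is to exploit the diagonal dominance encoded in \eqref{eq:cond-v} through a Neumann-series (power-series) expansion, exactly the method advertised in the introduction. Since every $v_j > 2\sum_{i=1}^{n-1} a_i \ge 0$ is strictly positive, the diagonal matrix $V$ is invertible, and I would factor
\[
H = V - H_0 = V\l(I - V^{-1}H_0\r).
\]
Thus $H$ is invertible if and only if $I - V^{-1}H_0$ is, and for that it suffices to show that $V^{-1}H_0$ has operator norm strictly less than one in a convenient norm.

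The natural choice is the induced $\ell^\infty$ norm, i.e.\ the maximum absolute row sum. The $i$th row of $H_0$ has zero diagonal entry and off-diagonal entries $a_{|i-j|}$, so its absolute row sum is
\[
\sum_{j \neq i} \abs{(H_0)_{ij}} = \sum_{k=1}^{i-1} a_k + \sum_{k=1}^{n-i} a_k \le 2\sum_{k=1}^{n-1} a_k .
\]
Dividing the $i$th row by $v_i$ and maximizing over $i$, I would obtain
\[
\norm{V^{-1}H_0}_\infty = \max_{1\le i\le n} \frac{1}{v_i}\sum_{j\neq i}a_{|i-j|} \le \max_{1\le i \le n}\frac{2\sum_{k=1}^{n-1}a_k}{v_i} < 1,
\]
where the final strict inequality is precisely the hypothesis \eqref{eq:cond-v}.

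With this norm bound in hand, the Neumann series $\sum_{m=0}^\infty (V^{-1}H_0)^m$ converges and equals $(I - V^{-1}H_0)^{-1}$, so that $H^{-1} = \l(\sum_{m=0}^\infty (V^{-1}H_0)^m\r)V^{-1}$ exists; this establishes invertibility. Setting $\vec u = H^{-1}\vec 1$ then yields the unique solution of $H\vec u = \vec 1$, and reading off the $j$th component as the sum of the entries in the $j$th row of $G = H^{-1}$ gives $u_j = \sum_{k=1}^n G_{jk}$, as claimed.

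I expect the only genuine obstacle to be the bookkeeping in the row-sum estimate --- correctly identifying that the off-diagonal entries of row $i$ are $a_1,\dots,a_{i-1}$ to the left and $a_1,\dots,a_{n-i}$ to the right, and then bounding their total by $2\sum_{k=1}^{n-1} a_k$ so that the constant matches \eqref{eq:cond-v} exactly. Everything beyond that is the standard convergence argument for the Neumann series under a strict contraction bound.
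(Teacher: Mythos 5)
Your proof is correct, and the core mechanism is the same as the paper's: factor $H=V(I-V^{-1}H_0)$ and invert via the Neumann series (the paper's Lemma \ref{lem:power}). Where you diverge is in how the contraction bound $\|V^{-1}H_0\|<1$ is obtained. The paper works in the $\ell^2$ operator norm, proving $\|V^{-1}\|\le\max_j v_j^{-1}$ and separately $\|H_0\|\le 2\sum_i a_i$ by decomposing $H_0=\sum_i a_i(R_i+L_i)$ into shift matrices and showing $\|R_1\|,\|L_1\|\le 1$ and $R_i=R_1^i$; it then uses submultiplicativity. You instead use the induced $\ell^\infty$ norm and bound the product $V^{-1}H_0$ directly by its maximum absolute row sum, $\max_i v_i^{-1}\bigl(\sum_{k=1}^{i-1}a_k+\sum_{k=1}^{n-i}a_k\bigr)\le\max_i v_i^{-1}\cdot 2\sum_k a_k<1$, which is essentially a Gershgorin/diagonal-dominance computation. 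Your route is shorter and more elementary for this lemma, and since the Neumann series converges in any submultiplicative matrix norm, nothing is lost here. The one thing the paper's choice of the spectral norm buys is reuse in Section \ref{sec:soft}: there the authors need the \emph{strict} bound $\|H_0\|<2\sum_i a_i$ (Lemma \ref{lem:hardH0}) to handle the soft inequality $v_j\ge 2\sum_i a_i$, and that strictness holds for the spectral norm but fails for the row-sum norm (e.g.\ for $a_1>0$, $a_2=\cdots=a_{n-1}=0$, the interior rows of $H_0$ have row sum exactly $2a_1$). So your argument proves Lemma \ref{lem:inverse} as stated but would not extend as-is to Theorem \ref{thm:genLST2}.
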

\begin{remark}\label{remark:evalue}
Moreover, all eigenvalues of $H$ are strictly positive. This is easily deduced from the fact that the matrix $H$ is both self-adjoint and strictly diagonally dominant by construction. Together, this means, due in part to positive semi-definiteness, that all eigenvalues of $H$ are real and greater than or equal to zero. Further, as $H$ is invertible, zero cannot be an eigenvalue and the result follows.
\end{remark}

\begin{lem}\label{lem:posi-Green}
For $i,j=1,\cdots,n$, $$G_{ij}\ge0,$$ furthermore,
\begin{align*}
    u_j>0. 
\end{align*}
\end{lem}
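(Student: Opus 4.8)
The plan is to exploit the splitting $H = V - H_0$ together with the power-series (Neumann) expansion advertised in the proof outline. Since condition \eqref{eq:cond-v} forces each $v_j > 0$, the diagonal matrix $V$ is invertible, and I would factor $H = V(I - V^{-1}H_0)$, so that $H^{-1} = (I - V^{-1}H_0)^{-1}V^{-1}$ whenever the middle factor is invertible. Writing $M := V^{-1}H_0$, observe that $M$ has only nonnegative entries: $V^{-1}$ is diagonal with positive entries $1/v_j$, and every entry of $H_0$ equals some $a_i \ge 0$. This entrywise nonnegativity of $M$ is what will ultimately propagate to $G_{ij}$.

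Next I would show the Neumann series $\sum_{m=0}^{\infty} M^m$ converges, which is the crux of the argument. The cleanest route is to bound the maximum absolute row sum (the $\ell^\infty$ operator norm) of $M$. The $j$-th row sum of $H_0$ is $\sum_{m=1}^{j-1}a_m + \sum_{m=1}^{n-j}a_m \le 2\sum_{i=1}^{n-1}a_i$, so the $j$-th row sum of $M$ is at most $\tfrac{1}{v_j}\,2\sum_{i=1}^{n-1}a_i$, which is strictly less than $1$ by \eqref{eq:cond-v}. Hence $\|M\|_\infty < 1$, so $\rho(M) < 1$ and $(I - M)^{-1} = \sum_{m=0}^{\infty} M^m$. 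Each power $M^m$ is a product of entrywise-nonnegative matrices, hence nonnegative, so $\sum_m M^m$ is nonnegative; multiplying on the right by the nonnegative diagonal $V^{-1}$ preserves this. Therefore every entry of $H^{-1} = \big(\sum_m M^m\big) V^{-1}$ is nonnegative, i.e. $G_{ij}\ge 0$.

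Finally, for the strict inequality $u_j > 0$, I would isolate the $m=0$ term of the series, writing $H^{-1} = V^{-1} + \sum_{m\ge 1} M^m V^{-1}$. The first summand contributes $1/v_j$ to the diagonal entry $G_{jj}$, while every remaining summand is entrywise nonnegative, so $G_{jj} \ge 1/v_j > 0$. Since $u_j = \sum_{k=1}^{n} G_{jk}$ by \eqref{landveccomp} and each $G_{jk}\ge 0$, I conclude $u_j \ge G_{jj} > 0$.

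The main obstacle is the convergence step, and specifically the bookkeeping of the row sums of $H_0$ near the ends of the lattice; once the operator-norm estimate $\|M\|_\infty<1$ is in hand, both nonnegativity and strict positivity follow essentially for free from the structure of the series. One subtlety worth recording is that the boundary rows have strictly smaller row sums (only one side of the lattice contributes), so the bound $\le 2\sum_i a_i$ is not tight there, but the uniform estimate supplied by \eqref{eq:cond-v} is all that the argument requires.
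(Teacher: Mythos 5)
Your proposal is correct and rests on the same core mechanism as the paper's proof: factor $H = V(I - V^{-1}H_0)$, expand $(I - V^{-1}H_0)^{-1}$ as a Neumann series whose terms are entrywise nonnegative (since $V^{-1}$ and $H_0$ are), and read off $G_{ij}\ge 0$. Two of your steps deviate in ways worth noting. First, you certify convergence by bounding the maximum row sum $\|V^{-1}H_0\|_\infty < 1$ directly from \eqref{eq:cond-v} and then passing through the spectral radius; the paper instead verifies the hypothesis of Lemma \ref{lem:power} in the operator $2$-norm by combining $\|V^{-1}\|\le \max_j v_j^{-1}$ with $\|H_0\|\le 2\sum_i a_i$ from Lemma \ref{lem:norm}, whose proof requires the shift-matrix decomposition $H_0=\sum_i a_i(R_i+L_i)$. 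Your row-sum computation is more elementary and self-contained, and your observation that the boundary rows are not tight foreshadows the sharpening carried out in Section \ref{sec:soft}. Second, for the strict positivity $u_j>0$ you isolate the $m=0$ term of the series to obtain the quantitative bound $G_{jj}\ge 1/v_j>0$, whence $u_j\ge G_{jj}>0$ by \eqref{landveccomp}; the paper argues more softly that since $H$ is invertible no row of $H^{-1}$ can be identically zero, so the nonnegative sum $u_j=\sum_k G_{jk}$ must be positive. Your version is sharper (it yields the explicit lower bound $u_j\ge 1/v_j$) at no extra cost. Both arguments are valid.
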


We will prove Lemma \ref{lem:inverse} and Lemma \ref{lem:posi-Green} later. We will first complete the proof of Theorem \ref{thm:genLST}.

\begin{proof}[Proof of Theorem \ref{thm:genLST}]
Let $\vec{x}$ be an eigenvector of $H$ with eigenvalue $\lambda$:
\begin{equation*}
H\vec{x}  =  \lambda \vec{x}
\end{equation*}
\begin{equation}
\label{eigenvec}
\vec{x}  =  \lambda H^{-1} \vec{x}
\end{equation}
\begin{equation}
\label{eigenveccomp}
x_j  =  \lambda \sum_{k=1}^{n} G_{jk} x_k,
\end{equation}
where \eqref{eigenvec} follows from Lemma \ref{lem:inverse} and \eqref{eigenveccomp} follows from Lemma \ref{lem:inverse} and matrix multiplication. We now examine the vector $\vec{x}$ scaled by its maximum value, entry-wise. Using Remark \ref{remark:evalue}, we can assume $\lambda>0$: 
\begin{eqnarray}
\nonumber
\frac{|x_j|}{\max\limits_{1\le k\le n}|x_k|} & = & \lambda\ \left|\sum_{k=1}^{n} G_{jk} \frac{x_k}{\max\limits_{1\le k\le n}|x_k|}\right|\
\\
\label{bytrieq}
 & \le & \lambda \sum_{k=1}^{n} \left|G_{jk} \frac{x_k}{\max\limits_{1\le k\le n}|x_k|}\right| 
\\
\label{bylem5}
 & \le & \lambda \sum_{k=1}^{n} G_{jk} \left|\frac{x_k}{\max\limits_{1\le k\le n}|x_k|}\right| 
\\
\label{upboundone}
 & \le & \lambda \sum_{k=1}^{n} G_{jk} 
\\
\label{bylem3rem4}
 & = & \lambda  u_j,
\end{eqnarray}
where \eqref{bytrieq} follows by the triangle inequality, \eqref{bylem5} by Lemma \ref{lem:posi-Green}, \eqref{upboundone} by our upper bound being one, and \eqref{bylem3rem4} by Lemma \ref{lem:inverse}.
\end{proof}

Our proofs for Lemma \ref{lem:inverse} and Lemma \ref{lem:posi-Green} rely on the following result which pertains to power series expansion of matrices.
\begin{lem}\label{lem:power}
For two $n\times n$ matrices $A$ and $B$, if $A$ is invertible and $\|A^{-1}B\|<1$,\footnote{\;$\|\cdot\|$ denotes the operator norm.} then $A-B$ is invertible and 
\begin{align}\label{eq:power1}
    (A-B)^{-1}=\sum_{k=0}^{\infty}\left(A^{-1}B\right)^kA^{-1}.
\end{align}
\end{lem}
Lemma \ref{lem:power} is quite standard. For the sake of completeness, we include the proof here for the readers' convenience. 
\begin{proof}[Proof of Lemma \ref{lem:power}]
It is a known theorem in Analysis that, provided $\|A\| < 1$ and letting $I$ denote the identity matrix
\begin{align}\label{eq:power2}
    (I - A)^{-1}=\sum_{k=0}^{\infty}A^k.
\end{align}
Thus, we can see from the following manipulations that
\begin{align*}
(A - B)^{-1} = (A(I - A^{-1}B))^{-1} = (I - A^{-1}B)^{-1}A^{-1}. 
\end{align*}
Hence, provided $\|A^{-1}B\| < 1$,
\begin{align*}
    (I - A^{-1}B)^{-1}A^{-1} = \sum_{k=0}^{\infty}(A^{-1}B)^kA^{-1}.
\end{align*}
\end{proof}

 In order to apply Lemma \ref{lem:power}, we also need estimates of the matrix norms of $V$ and $H_0$. We can prove that
\begin{lem}\label{lem:norm}
\begin{align}\label{eq:Vnorm}
    \|V^{-1}\|\le \max_{1\le j\le n} v_j^{-1}
\end{align}
and 
\begin{align}\label{eq:H0norm}
    \|H_0\|\le 2\sum_{i=1}^{n-1}a_i
\end{align}
\end{lem}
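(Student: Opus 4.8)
The plan is to treat the two estimates separately, each via a standard operator-norm bound.

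For \eqref{eq:Vnorm}, I would use that $V$ is diagonal with entries $v_1,\dots,v_n$, and by \eqref{eq:cond-v} each $v_j>0$, so $V^{-1}$ is again diagonal with entries $v_j^{-1}>0$. A diagonal matrix is normal, hence its operator norm equals its spectral radius, i.e.\ the largest absolute value among its diagonal entries; this gives $\|V^{-1}\|=\max_{1\le j\le n}v_j^{-1}$ and in particular the claimed inequality. Equivalently, one checks directly that for any unit vector $\vec{w}$ one has $\|V^{-1}\vec{w}\|^2=\sum_j v_j^{-2}|w_j|^2\le(\max_j v_j^{-1})^2$.

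For \eqref{eq:H0norm}, the key observation is that $H_0$ is a symmetric Toeplitz matrix whose $(i,j)$ entry equals $a_{|i-j|}$ with the convention $a_0=0$. I would first bound the maximum absolute row sum: for row $i$ this sum is $\sum_{j=1}^{n}a_{|i-j|}=\sum_{k=1}^{i-1}a_k+\sum_{k=1}^{n-i}a_k$, and since every $a_k\ge 0$ each of the two partial sums is at most the full sum $\sum_{k=1}^{n-1}a_k$, so every row sum is bounded by $2\sum_{i=1}^{n-1}a_i$. Writing $\|\cdot\|_\infty$ for the induced maximum-row-sum norm, this yields $\|H_0\|_\infty\le 2\sum_{i=1}^{n-1}a_i$. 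To pass from this to the operator norm I would invoke the symmetry of $H_0$: for a self-adjoint matrix the operator norm equals the spectral radius, and by Gershgorin's circle theorem every eigenvalue of $H_0$ lies within radius $\sum_{j\ne i}a_{|i-j|}$ of the diagonal entry $0$, so $|\lambda|\le\|H_0\|_\infty$ for each eigenvalue $\lambda$. Combining these gives $\|H_0\|=\max_\lambda|\lambda|\le\|H_0\|_\infty\le 2\sum_{i=1}^{n-1}a_i$.

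There is no serious obstacle here, as both estimates are routine. The only point requiring a little care is the row-sum computation for $H_0$: one must read off the Toeplitz structure correctly and notice that the factor of $2$ arises because a generic interior row sees the coefficients $a_1,\dots,a_{i-1}$ to its left and $a_1,\dots,a_{n-i}$ to its right, each group contributing at most $\sum_{i=1}^{n-1}a_i$.
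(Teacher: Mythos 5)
Your proof is correct, but for the bound on $\|H_0\|$ you take a genuinely different route from the paper. For \eqref{eq:Vnorm} both arguments are the same: $V^{-1}$ is diagonal and self-adjoint, so its operator norm is the largest of the $v_j^{-1}$. For \eqref{eq:H0norm}, however, the paper decomposes $H_0=\sum_{i=1}^{n-1}a_i(R_i+L_i)$ into the $i$-th off-diagonal shift matrices, shows $R_i=(R_1)^i$ and $L_i=(L_1)^i$ with $\|R_1\|,\|L_1\|\le 1$ by a direct computation of $\|R\vec{x}\|$, and concludes by the triangle inequality; you instead read off the Toeplitz structure, bound the maximum row sum by $2\sum_{i=1}^{n-1}a_i$, and pass to the operator norm via self-adjointness together with Gershgorin's circle theorem. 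Both are valid and about equally elementary. Your row-sum argument is shorter and more self-contained, but the paper's shift-operator decomposition is not gratuitous: the matrices $L_j+R_j=A_j(n)$ introduced there are exactly the objects analyzed in Section \ref{sec:soft} to sharpen the bound to the strict inequality $\|H_0\|<2\sum_{i=1}^{n-1}a_i$, and a Gershgorin bound cannot be sharpened in that way since the Gershgorin discs are closed. So your approach fully proves Lemma \ref{lem:norm}, but would have to be supplemented by something like the paper's decomposition to support Lemma \ref{lem:hardH0} later on.
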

\begin{proof}
We will start with (\ref{eq:Vnorm}). It is observed that $V$ is a diagonal matrix with all diagonal entries strictly greater than zero (in both Theorems \ref{thm:genLST} and \ref{thm:genLST2}, which we will discuss later). Thus we can safely say that $V^{-1}$ is well defined and has diagonal entries $(v_j)^{-1} = 1/v_j$. Further, as $V^{-1}$ is a diagonal, self-adjoint matrix, we know that $\|V^{-1}\|$ is equal to its maximal eigenvalue, which in this case is simply $\max\limits_{1\le j\le n}  v_j^{-1}$.\footnote{\,This relationship between a non-negative self-adjoint matrix's operator norm and its maximal eigenvalue is a known---albeit more advanced---result in Linear Algebra}
\end{proof}
\begin{proof}
Now we move on to \eqref{eq:H0norm}. We start with a definition of the notation $R_k$ and $L_k$ as the matrices with ones on the $k^{th}$ right and left off-diagonals, respectively, and zero elsewhere. Now we can see that $H_0$, as given in initial equation \eqref{eq:H0}, has the following decomposition
\begin{align*}
    H_0 = a_1R_1 + a_1L_1 + \cdots + a_{n-1}R_{n-1} + a_{n-1}L_{n-1}.
\end{align*}
Thus the norm of $H_0$ obeys the following statement, based of the triangle inequality and properties of norms
\begin{eqnarray*}
    \|H_0\| & = & \|a_1R_1 + a_1L_1 + \cdots + a_{n-1}R_{n-1} + a_{n-1}L_{n-1}\|
\\
    \|H_0\| & \le & \abs{a_1}\|R_1\| + \abs{a_1}\|L_1\| + \cdots + \abs{a_{n-1}}\|R_{n-1}\| + \abs{a_{n-1}}\|L_{n-1}\|
\end{eqnarray*}

Further, we can show that all norms $\|R_i\|$ and $\|L_i\|, \ i = 1,\cdots,n-1$ are less than or equal to one. This is demonstrated in several steps. First, it can be shown by direct computation that $R_i = (R_1)^i$ and $L_i = (L_1)^i$. Then, one can see
\begin{align*}
    \|R_i\| =  \|(R_1)^i\| \le \|R_1\|^i 
\end{align*}
The same results apply to all $L_i$. The final step to reach our assertion that all the pertinent norms are less than or equal to one is to show that $\|R_1\|$ and $\|L_1\|$ are less than or equal to one, which is done below.
    \begin{align*}
    L_1:=L=\begin{pmatrix}
	0 & 0 & 0 & \cdots & 0 \\
	1 & 0 & 0 & \ddots & \vdots \\
	0 & \ddots & \ddots & \ddots & 0 \\
	\vdots & \ddots & \ddots & 0 & 0 \\
	0 & \cdots & 0 & 1 & 0
\end{pmatrix}, \  \ 
R_1:=R=\begin{pmatrix}
	0 & 1 & 0 & \cdots & 0 \\
	0 & 0 & 1 & \ddots & \vdots \\
	0 & \ddots & \ddots & \ddots & 0 \\
	\vdots & \ddots & \ddots & 0 & 1 \\
	0 & \cdots & 0 & 0 & 0
\end{pmatrix}.
\end{align*}
For any  $\vec{x}=(x_1,x_2,\cdots,x_n)^T$, direct computation shows that 
\begin{align*}
   R\vec{x}=R\begin{pmatrix}x_1\\x_2\\x_3\\
   \vdots\\ x_{n-1}\\x_n\end{pmatrix}=\begin{pmatrix}0\\x_1\\x_2\\
   \vdots\\x_{n-2}\\ x_{n-1}\end{pmatrix}
\end{align*}
Therefore, 
\begin{align} \label{eq:Rx}
    \|R\vec{x}\|^2=x_1^2+x_2^2+\cdots+x_{n-1}^2\le 
    x_1^2+x_2^2+\cdots+x_{n-1}^2+x_n^2=\|\vec{x}\|^2
\end{align}
which implies $\|R\vec{x}\|\le \|\vec{x}\|$. 
According to the definition of the matrix operator norm and \eqref{eq:Rx}
\begin{align*}
    \|R\|=\max_{\vec{x}\neq \vec{0}}\frac{\|R\vec{x}\|}{\|\vec{x}\|}\le 1.
\end{align*}
Exactly the same argument shows that $\|L\|\le 1$, which completes the assertion. Putting all of the above findings together, we can substitute all $\|R_i\|$s and $\|L_i\|$s in our $H_0$ equation with ones via an inequality.
\begin{align*}
    \|H_0\| \le \sum_{i=1}^{n-1}\abs{a_i} + \sum_{i=1}^{n-1}\abs{a_i} = 2\sum_{i=1}^{n-1}a_i
\end{align*}
This is what we desire. The absolute value falls away because all $a_i$s are positive by construction.
\end{proof}
Now we can proceed to prove Lemma \ref{lem:inverse} and Lemma \ref{lem:posi-Green} using equation \eqref{eq:power1} and Lemma \ref{lem:norm}. 

\begin{proof}[Proof of Lemma \ref{lem:inverse}]
Given the structure of $V$ and $H_0$ given in Theorem \ref{thm:genLST}, in particular \eqref{eq:cond-v}, we can see that by Lemma \ref{lem:norm}
\begin{align*}
    \|V^{-1}\|\le \max_{1\le j\le n} v_j^{-1} < \frac{1}{2\sum_{i=1}^{n-1}a_i}
\end{align*}
and, as proven
\begin{align*}
    \|H_0\|\le 2\sum_{i=1}^{n-1}a_i.
\end{align*}
Thus, we see that
\begin{align*}
    \|V^{-1}H_0\| \le \|V^{-1}\| \|H_0\| < \frac{2\sum_{i=1}^{n-1}a_i}{2\sum_{i=1}^{n-1}a_i}=1.
\end{align*}
Hence, we meet the conditions to satisfy Lemma \ref{lem:power} and have invertibility of our matrix $H$.
\end{proof}

\begin{proof}[Proof of Lemma \ref{lem:posi-Green}]
With the same conditions as for the proof of Lemma \ref{lem:inverse}, we see that every entry of $V^{-1}$ --- whose diagonal elements $v_j^{-1}, \ j = 1,\cdots,n,$ are the reciprocals of the diagonal elements of $V$ --- and every entry of $H_0$ are non-negative, based on the structure of these two matrices as given in Theorem \ref{thm:genLST}. We thus have that every entry of $(V-H_0)^{-1}$ is non-negative by the structure of the power series expansion given in Lemma \ref{lem:power}. Next, since $V-H_0$ is invertible, we have that none of the rows of $(V-H_0)^{-1}$ are identically zero. Together with \eqref{landveccomp}, this implies $u_j>0$ for $j=1,\cdots,n.$
\end{proof}

\section{Soft Hamiltonian Potential Inequality}\label{sec:soft}
In this section, we would like to study the optimality of condition \eqref{eq:cond-v}.  We now prove that
\begin{thm}\label{thm:genLST2}
Let $H$ be given as in the Theorem \ref{thm:genLST}. All conclusions of Theorem \ref{thm:genLST} hold true under the condition\footnote{Clearly, as discussed in Remark \ref{rmk:H0}, if $a_1=\cdots=a_{n-1}=0$, we still need the strict inequality $v_j>0$ for the trivial diagonal case to be true.}
\begin{align}\label{eq:cond-v2}
    v_j\ge 2\sum_{i=1}^{n-1}a_i, \ \ j=1,\cdots, n.
\end{align}
\end{thm}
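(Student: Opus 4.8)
The key difficulty in extending from the strict inequality \eqref{eq:cond-v} to the soft inequality \eqref{eq:cond-v2} is that when $v_j = 2\sum_{i=1}^{n-1}a_i$ for some $j$, we lose the bound $\|V^{-1}H_0\| < 1$ that powered the entire power-series argument of Section \ref{sec:strict}. At the boundary we only obtain $\|V^{-1}H_0\| \le 1$, so Lemma \ref{lem:power} no longer applies directly and we cannot immediately conclude invertibility, let alone the explicit non-negative series expansion. The natural strategy is therefore a limiting or perturbation argument: approximate a boundary-case matrix $H$ by a sequence of strictly-dominant matrices to which Theorem \ref{thm:genLST} already applies, and then pass to the limit.

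The plan is as follows. Given $H = -H_0 + V$ satisfying the soft condition \eqref{eq:cond-v2}, I would introduce a perturbed potential $V^{(\eps)} = V + \eps I$ for $\eps > 0$, and set $H^{(\eps)} = -H_0 + V^{(\eps)}$. Each perturbed diagonal entry satisfies $v_j + \eps > 2\sum_{i=1}^{n-1}a_i$, so $H^{(\eps)}$ meets the strict condition \eqref{eq:cond-v} and Theorem \ref{thm:genLST} applies verbatim: $H^{(\eps)}$ is invertible, its inverse $G^{(\eps)}_{ij}$ has non-negative entries (Lemma \ref{lem:posi-Green}), and the associated landscape vector $u^{(\eps)}_j = \sum_{k=1}^n G^{(\eps)}_{jk} > 0$ exists. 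The first key step is to establish that $H$ itself is invertible. This follows from Remark \ref{remark:evalue}: $H$ is self-adjoint and, under \eqref{eq:cond-v2}, weakly diagonally dominant, hence positive semi-definite; to rule out a zero eigenvalue one argues that strict positivity of the eigenvalues survives, for instance because the off-diagonal coupling through $H_0$ makes $H$ irreducible, so Gershgorin together with the irreducibility-based refinement forces all eigenvalues to be strictly positive. Once $H$ is invertible, $G = H^{-1}$ and $u = G\vec{1}$ are well-defined.

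The second key step is to transfer the positivity conclusions across the limit $\eps \to 0^+$. Since matrix inversion is continuous on the open set of invertible matrices and $H^{(\eps)} \to H$ entrywise with $H$ invertible, we have $G^{(\eps)}_{ij} \to G_{ij}$ for every $i,j$. Each $G^{(\eps)}_{ij} \ge 0$, so the limits satisfy $G_{ij} \ge 0$, recovering Lemma \ref{lem:posi-Green} in the soft regime. Consequently $u_j = \sum_{k=1}^n G_{jk} = \lim_{\eps\to 0^+} u^{(\eps)}_j \ge 0$. The strict positivity $u_j > 0$ follows exactly as in the strict case: since $H$ is invertible, no row of $G$ is identically zero, and combined with $G_{jk}\ge0$ this gives $u_j > 0$. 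With invertibility, non-negativity of $G$, and positivity of $u$ all in hand, the main estimate \eqref{eq:LST} is then obtained by repeating verbatim the chain of inequalities \eqref{bytrieq}--\eqref{bylem3rem4} in the proof of Theorem \ref{thm:genLST}, since that argument used only these three facts together with $\lambda > 0$.

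I expect the main obstacle to be the invertibility of $H$ at the boundary, i.e.\ showing that $\lambda = 0$ cannot occur when some $v_j$ equals $2\sum_{i=1}^{n-1}a_i$ exactly. Weak diagonal dominance alone only yields positive \emph{semi}-definiteness, so the argument must genuinely use the off-diagonal structure. The cleanest route is to invoke the theory of irreducibly diagonally dominant matrices (a Taussky-type strengthening of Gershgorin): if $H$ is irreducible and weakly diagonally dominant with strict dominance in at least one row, then $H$ is nonsingular. Here irreducibility is guaranteed whenever at least one $a_i > 0$ (the trivial all-$a_i = 0$ case being handled separately as noted in the footnote to \eqref{eq:cond-v2}), and strict dominance in some row would need to be arranged or the argument adapted to the case of uniform equality; an alternative that sidesteps any row-by-row bookkeeping is simply the perturbation-and-limit passage described above, which only requires that $H = \lim H^{(\eps)}$ be invertible and never needs an explicit series for $H^{-1}$ at the boundary.
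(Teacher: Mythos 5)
Your overall strategy is sound except at the single point that actually carries the weight of Theorem \ref{thm:genLST2}: the invertibility of $H$ when some $v_j$ sits exactly at $2\sum_{i=1}^{n-1}a_i$. You correctly identify this as the main obstacle, but neither of your two proposed resolutions closes it. The limiting argument is circular here: you need $H$ invertible \emph{before} you can say $G^{(\varepsilon)}\to G$ (indeed, since $H$ is positive semi-definite and $H^{(\varepsilon)}=H+\varepsilon I$, the norms $\|(H^{(\varepsilon)})^{-1}\|$ blow up as $\varepsilon\to0^+$ precisely when $H$ is singular, so the limit passage cannot detect invertibility on its own). The Taussky/irreducible-diagonal-dominance route also does not apply as stated, because $H$ need not be irreducible: if, say, $a_2>0$ and all other $a_i=0$ with $n\ge3$, the even and odd sublattices decouple and $H$ is permutation-similar to a block-diagonal matrix. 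This can be repaired --- one decomposes into irreducible blocks and checks that each block contains at least one strictly dominant row (the minimal index in each connected component works, since row $1$, and more generally row $i$ with $i\le\max\{k:a_k>0\}$, has off-diagonal sum strictly less than $2\sum a_i$) --- but you explicitly leave this "to be arranged," so as written the proof has a genuine gap at its central step. A second, smaller issue: even after invertibility and $G_{ij}\ge0$ are secured by your limit argument, note that the conclusion $\lambda>0$ (Remark \ref{remark:evalue}) also needs $H$ nonsingular, so everything funnels back to the same missing step.

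For comparison, the paper takes a different route that avoids any limiting argument: it sharpens the norm estimate to the strict inequality $\|H_0\|<2\sum_{i=1}^{n-1}a_i$ (Lemma \ref{lem:hardH0}). This is done by writing $H_0=\sum_j a_j A_j(n)$ with $A_j(n)=L_j+R_j$, showing via a permutation of the standard basis (Lemma \ref{lem:basis}) that each $A_j(n)$ is orthogonally similar to a direct sum of path-graph adjacency matrices $A_1(k_i)$, and proving $\|A_1(m)\|<2$ by ruling out $\pm2$ as eigenvalues through the associated difference equation. With the strict bound on $\|H_0\|$ in hand, $\|V^{-1}H_0\|<1$ holds even under the soft condition \eqref{eq:cond-v2}, so the entire power-series machinery of Section \ref{sec:strict} (invertibility \emph{and} entrywise non-negativity of $H^{-1}$) applies verbatim at the boundary. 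If you want to salvage your approach, the cleanest fix is to either carry out the block-by-block Taussky argument sketched above, or simply prove the strict spectral bound $\|H_0\|<2\sum a_i$ as the paper does --- at which point the perturbation step becomes unnecessary.
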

As we see from the method of proof employed in Section \ref{sec:strict}, it is enough to show
\begin{lem}\label{lem:hardH0}
\begin{align}\label{eq:H0norm2}
    \|H_0\|< 2\sum_{i=1}^{n-1}a_i.
\end{align}
\end{lem}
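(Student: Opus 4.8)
The plan is to exploit that $H_0$ is real and symmetric, so its operator norm is the maximum of the associated quadratic form, $\|H_0\| = \max_{\|\vec x\|=1}|\langle \vec x, H_0\vec x\rangle|$, a maximum that is attained on the compact unit sphere. Throughout I assume at least one coefficient $a_{i^*}>0$; if all $a_i$ vanish then $H_0=0$ and the right-hand side of \eqref{eq:H0norm2} is also $0$, which is precisely the degenerate case excluded in the footnote to Theorem \ref{thm:genLST2}. The whole point is that the bound $\|H_0\|\le 2\sum_i a_i$ from Lemma \ref{lem:norm} comes from treating each shift $R_i,L_i$ separately, and the finiteness of the lattice prevents all of these bounds from being saturated at once.

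First I would expand the quadratic form using the decomposition $H_0=\sum_{i=1}^{n-1}a_i(R_i+L_i)$ established in the proof of Lemma \ref{lem:norm}. Since $R_i+L_i$ has its only nonzero entries at the positions $(j,j+i)$ and $(j+i,j)$, one gets, for real $\vec x$, the clean identity $\langle \vec x,(R_i+L_i)\vec x\rangle = 2\sum_{j=1}^{n-i}x_j x_{j+i}$, independent of any shift-direction convention. Writing $S_i:=\sum_{j=1}^{n-i}x_j x_{j+i}$, this gives $\langle \vec x,H_0\vec x\rangle = 2\sum_{i=1}^{n-1}a_i S_i$, and the coarse bound of Lemma \ref{lem:norm} is exactly $|S_i|\le 1$. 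The task is to upgrade this to a strict inequality for the index $i^*$.

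Next I would make the boundary loss explicit. By the triangle inequality and AM--GM,
\[|S_i|\le \tfrac12\sum_{j=1}^{n-i}\bigl(x_j^2+x_{j+i}^2\bigr)=\|\vec x\|^2-\tfrac12\Bigl(\sum_{j=1}^{i}x_j^2+\sum_{j=n-i+1}^{n}x_j^2\Bigr)\le 1,\]
where the subtracted term collects precisely the entries near the two ends of the lattice that an $i$-step shift leaves out. The decisive step, and the one I expect to be the \emph{main obstacle}, is ruling out equality for the index $i^*$ with $a_{i^*}>0$. I would argue that $|S_{i^*}|=1$ forces simultaneously: (a) the endpoint entries to vanish, $x_1=\cdots=x_{i^*}=0$ and $x_{n-i^*+1}=\cdots=x_n=0$, from the boundary term above being zero; and (b) $|x_{j+i^*}|=|x_j|$ with matching signs for every admissible $j$, from the equality cases of AM--GM and the triangle inequality, i.e.\ $x_{j+i^*}=\pm x_j$ uniformly. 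Feeding (a) into the recursion (b) propagates the vanishing along each residue class modulo $i^*$ and forces $\vec x\equiv 0$, contradicting $\|\vec x\|=1$. Hence $|S_{i^*}|<1$ strictly for every unit vector.

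Finally I would assemble the pieces: for every unit $\vec x$,
\[|\langle \vec x,H_0\vec x\rangle|\le 2\sum_{i=1}^{n-1}a_i|S_i|\le 2a_{i^*}|S_{i^*}|+2\sum_{i\neq i^*}a_i<2a_{i^*}+2\sum_{i\neq i^*}a_i=2\sum_{i=1}^{n-1}a_i,\]
the strictness entering through the single term $i^*$ while all other $|S_i|\le 1$. Since the maximum defining $\|H_0\|$ is attained at some unit vector $\vec x^{*}$, this yields $\|H_0\|=|\langle \vec x^{*},H_0\vec x^{*}\rangle|<2\sum_{i=1}^{n-1}a_i$, which is \eqref{eq:H0norm2}. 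The only genuinely delicate point is the equality analysis of step three; the rest is a routine combination of AM--GM, the triangle inequality, and compactness of the sphere.
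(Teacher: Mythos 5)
Your proof is correct, and it takes a genuinely different route from the paper's. The paper writes $H_0=\sum_{j}a_j A_j(n)$ with $A_j(n)=L_j+R_j$ and proves $\|A_j(n)\|<2$ for each $j$ separately: Lemma \ref{lem:basis} permutes the basis so that $A_j(n)$ becomes block-diagonal with tridiagonal blocks $A_1(k_i)$, and then a difference-equation argument ($x_{k-1}+x_{k+1}=2x_k$ with zero boundary conditions has only the trivial solution) rules out $\pm2$ as eigenvalues; the strict bounds are then summed. You instead work directly with the quadratic form of the whole matrix, writing $\langle\vec x,H_0\vec x\rangle=2\sum_i a_iS_i$ and making the ``boundary loss'' $\tfrac12\bigl(\sum_{j\le i}x_j^2+\sum_{j>n-i}x_j^2\bigr)$ explicit in the bound $|S_i|\le1$; your equality analysis (boundary entries vanish, AM--GM equality forces $|x_{j+i^*}|=|x_j|$, and the vanishing propagates along residue classes mod $i^*$) is sound, and in fact the sign-matching from the triangle inequality is not even needed --- the modulus recursion alone kills $\vec x$. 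Your invocation of compactness to pass from the pointwise strict inequality to strictness of the norm is the right way to close the argument, and both proofs require at least one $a_{i^*}>0$ (otherwise the statement is false as both sides vanish), which you correctly flag. What each approach buys: the paper's block decomposition is structurally informative --- it reduces every $A_j(n)$ to the classical tridiagonal case, whose spectrum is explicitly known via Chebyshev polynomials --- whereas your argument is more elementary and self-contained, avoids the permutation machinery of Lemma \ref{lem:basis} entirely, treats all hopping ranges at once, and only needs strictness at a single index with $a_{i^*}>0$ rather than for every $A_j(n)$.
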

It follows from
\begin{lem}\label{lem:offdiagonal}
If $A_j(n)$ is an $n \times n$ matrix given by
\begin{align}\label{eq:Ajn}
   A_j(n)=\begin{pmatrix}
	0 & 0 & \cdots & 1 & 0 & \cdots & 0 \\
	0 & 0 & \ddots &\ddots & \ddots &  \ddots & \vdots \\
	\vdots & \ddots & \ddots & \ddots & \ddots & \ddots & 0 \\
	1 & \ddots & \ddots & \ddots & \ddots & \ddots & 1 \\
	0 & \ddots & \ddots & \ddots & \ddots & \ddots & \vdots \\
	\vdots & \ddots & \ddots & \ddots & \ddots & 0 & 0 \\
	0  &\cdots & 0& 1 & \cdots & 0 & 0
\end{pmatrix} =L_j+R_j, 
\end{align}
then all the eigenvalues of $A_j(n)$ are strictly between $-2$ and $2$. As a consequence, due to the fact that $A_j(n)$ is self-adjoint, we have $\|A_j\|<2$.
\end{lem}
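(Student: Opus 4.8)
The plan is to exploit the sparse, highly structured form of $A_j(n)=L_j+R_j$: its $(p,q)$ entry equals $1$ precisely when $|p-q|=j$ and vanishes otherwise. A direct application of Gershgorin's theorem would only deliver the \emph{closed} bound $[-2,2]$, since each row of $A_j(n)$ has at most two nonzero entries, both equal to $1$. Thus the real content of the lemma is the strictness of the inequality, and to secure it I would reduce $A_j(n)$, by a relabeling of indices, to a direct sum of the familiar nearest-neighbour (tridiagonal) matrices $A_1(m)$, whose spectra are known explicitly.

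First I would observe that indices $p,q$ are joined by a nonzero entry only if $p\equiv q\pmod{j}$, because $|p-q|=j$ forces $p$ and $q$ into the same residue class mod $j$; conversely, within a fixed residue class the indices $r,\,r+j,\,r+2j,\dots$ are joined exactly to their immediate successors and predecessors in that arithmetic progression, and to nothing else. Hence, after conjugating $A_j(n)$ by the permutation matrix that lists the indices one residue class at a time, it becomes block diagonal, with one block for each residue $r\in\{1,\dots,j\}$; that block is precisely the nearest-neighbour matrix $A_1(m_r)$ on $m_r$ points, where $m_r$ is the number of indices in $\{1,\dots,n\}$ congruent to $r$. Since permutation-similarity preserves both symmetry and the spectrum, the eigenvalues of $A_j(n)$ are exactly the union of the eigenvalues of the blocks $A_1(m_r)$.

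Next I would invoke the classical spectrum of the nearest-neighbour block: the ansatz $x_p=\sin\!\big(p\,k\pi/(m+1)\big)$ diagonalises $A_1(m)$ and yields the eigenvalues $2\cos\!\big(k\pi/(m+1)\big)$ for $k=1,\dots,m$ (with the $1\times1$ block, when $m=1$, equal to $0$). Because $0<k\pi/(m+1)<\pi$ for every admissible $k$, the cosine stays strictly inside $(-1,1)$, so each such eigenvalue lies strictly in $(-2,2)$. Collecting the blocks, every eigenvalue of $A_j(n)$ is strictly between $-2$ and $2$. Finally, since $A_j(n)=L_j+R_j$ is symmetric, hence self-adjoint, its operator norm equals its spectral radius, so $\|A_j(n)\|=\max_\lambda|\lambda|<2$, which is the assertion.

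The single delicate point --- and the main obstacle --- is precisely justifying the strict bound rather than the easy $|\lambda|\le 2$: the decomposition is what buys this, by routing the problem to the tridiagonal case, where the endpoints of each thread break translation invariance and keep the Dirichlet-type eigenvalues away from the boundary values $\pm 2$. I would also take care to handle the degenerate blocks of size one (eigenvalue $0$) and to note that the number of blocks and their sizes depend on $n\bmod j$, but none of these bookkeeping details affects the uniform conclusion.
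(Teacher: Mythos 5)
Your proof is correct, and its first half---conjugating $A_j(n)$ by the permutation that groups indices by residue class mod $j$, thereby reducing it to a direct sum of nearest-neighbour blocks $A_1(m_r)$---is exactly the content of the paper's Lemma \ref{lem:basis} and the use made of it in the paper's proof of Lemma \ref{lem:offdiagonal}. Where you diverge is in the key strictness step for the tridiagonal block: you diagonalise $A_1(m)$ explicitly with the sine ansatz $x_p=\sin\bigl(pk\pi/(m+1)\bigr)$, obtaining the eigenvalues $2\cos\bigl(k\pi/(m+1)\bigr)$ and reading off that they lie strictly in $(-2,2)$. The paper deliberately avoids this (it acknowledges the Chebyshev/explicit-spectrum route in a footnote) and instead argues by contradiction: if $\pm2$ were an eigenvalue, the eigenvector would solve the second-order difference equation $x_{k-1}+x_{k+1}=\pm 2x_k$ with Dirichlet boundary conditions $x_0=x_{n+1}=0$, and since the general solution of that equation is spanned by $\vec{1}$ and $(1,\dots,n)$, only the trivial solution survives the boundary conditions. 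Your route is shorter and gives more information (the full spectrum), at the cost of invoking the classical diagonalisation; the paper's route is self-contained and uses only the structure of the difference equation, which is in keeping with its stated aim of giving elementary, self-consistent arguments. Both are valid, and your handling of the $1\times 1$ blocks and of the permutation-invariance of the norm matches the paper's.
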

\begin{remark}
If $n=1$, we allow $j=1$, and denote the trivial case $A_1(1)=(0)$. 
\end{remark}

\begin{lem}\label{lem:basis}
 Let ${\mathcal O}=[\vec{e}_1,\vec{e}_2,\cdots,\vec{e}_n]$  be an ordered basis of $\R^n$, where $\vec{e}_j \ ,j=1\cdots,n$ are the standard basis vectors of $\R^n$. For any $j$ such that $2\le j\le n-1$, there is a rearrangement of ${\mathcal O}$ that is an ordered basis $\widetilde{{\mathcal O}}=[\vec{e}_{i_1},\vec{e}_{i_2},\cdots,\vec{e}_{i_n}]$ such that, under $\widetilde{{\mathcal O}}$, $A_j(n)$ has the following block matrix representation, with blocks of either the form \eqref{eq:Ajn} above, denoted $A_1(k_i)$, or completely $0$, denoted $O$.
 \begin{align}\label{eq:AjnBlock}
     \widetilde{A}_j(n)=\begin{pmatrix}
     A_1(k_1) & O & \cdots & O\\
     O & A_1(k_2) & \ddots & \vdots \\
     \vdots & \ddots & \ddots & O \\
     O & \cdots & O & A_1(k_j)
     \end{pmatrix}
 \end{align}
 where $k_1+k_2+\cdots k_j=n$.
\end{lem}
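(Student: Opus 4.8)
The plan is to recognize $A_j(n)$ as the adjacency matrix of a graph and to take for $\widetilde{\mathcal{O}}$ the rearrangement that sorts the standard basis vectors by their index modulo $j$. First I would record the combinatorial meaning of the matrix: its $(a,b)$ entry equals $1$ exactly when $|a-b|=j$, so $A_j(n)$ is the adjacency matrix of the graph $G$ on the vertex set $\{1,\dots,n\}$ whose edges join precisely those pairs of indices differing by $j$.

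Next I would decompose $G$ into its connected components. Every edge joins two vertices lying in the same residue class modulo $j$, so each component is contained in one class $\{r,\,r+j,\,r+2j,\dots\}\cap\{1,\dots,n\}$; conversely the members of a single class are strung together by edges (consecutive ones differ by $j$), so each class is exactly one connected component. Because $2\le j\le n-1$, the smallest positive representatives $1,\dots,j$ of the $j$ residue classes all lie in $\{1,\dots,n\}$, so there are exactly $j$ nonempty components, of sizes $k_1,\dots,k_j$ with $k_1+\cdots+k_j=n$. I would then take $\widetilde{\mathcal{O}}$ to list the basis one class at a time, as $\vec{e}_r,\vec{e}_{r+j},\vec{e}_{r+2j},\dots$ in increasing index order within each class.

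Under $\widetilde{\mathcal{O}}$, there are no edges across distinct classes, so every off-diagonal block is $O$; within the class of residue $r$, two elements $r+pj$ and $r+qj$ satisfy $|(r+pj)-(r+qj)|=j|p-q|$, which equals $j$ iff $|p-q|=1$, so in the increasing order the adjacencies occur exactly between consecutive entries --- and this is precisely the ones-pattern of $A_1(k_r)$. Hence $\widetilde{A}_j(n)$ has the asserted block-diagonal form \eqref{eq:AjnBlock}. The step needing the most care is this last identification of the within-class adjacency with the tridiagonal pattern, together with checking that singleton classes (which proliferate, e.g., when $j=n-1$) are exactly the degenerate blocks $A_1(1)=(0)=O$ from the preceding remark; everything else is bookkeeping on the permutation.
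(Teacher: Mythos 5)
Your proposal is correct and is essentially the paper's own argument: the paper's explicit algorithm groups the basis vectors into the sets $[e_i]=\{e_i,e_{i+j},e_{i+2j},\dots\}$, which are exactly the residue classes modulo $j$ that you obtain as connected components, listed in the same increasing order within each class. Your graph-theoretic phrasing is a cleaner packaging of the same permutation and the same verification that cross-class entries vanish while within-class adjacencies reproduce the $A_1(k_i)$ pattern.
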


\begin{proof}[Proof of Lemma \ref{lem:basis}]
We use an explicit construction as our method of proof.  Consider the matrix $A_j(n)$ as seen in Lemma \ref{lem:offdiagonal}.  To transform this matrix into one of the form $\widetilde{A}_j(n)$ (as seen in \eqref{eq:AjnBlock}), we need to perform a change of coordinates that involves permuting the order of the standard basis vectors.  To pick a specific ordering, we perform the following algorithm:
\begin{enumerate}
    \item Start with the standard basis vector $e_1$.  Add $j$ to the subscript.  If $j+1 \le n$, continue to add $j$ until $kj+1>n$ for some $k\in \mathbb{Z}$.  Group all of the basis vectors $e_1,e_{j+1},\ldots ,e_{(k-1)j+1}$.  Let the set $ \{ e_1,e_{j+1},\ldots ,e_{(k-1)j+1} \}$ be called $ [e_1]$. The order of the elements of $[e_1]$ and all future $[e_i]$ will be important.
    \item Repeat this process with $e_2$.  Add $j$ to the subscript.  If $j+2 \le n$, continue to add $j$ until $kj+2>n$ for some $k\in \mathbb{Z}$.  Group all of the basis vectors $e_2,e_{j+2},\ldots ,e_{(k-1)j+2}$.  Let the set $ \{ e_2,e_{j+2},\ldots ,e_{(k-1)j+2} \}$ be called $ [e_2]$.
    \item Continue this process until $e_j$ is reached.  Again, add $j$ to the subscript.  If $2j \le n$, continue to add $j$ until $kj>n$ for some $k\in \mathbb{Z}$.  Group all of the basis vectors $e_j,e_{2j},\ldots ,e_{(k-1)j}$.  Let the set $ \{ e_j,e_{2j},\ldots ,e_{(k-1)j} \}$ be called $ [e_j]$.  At this point, all of the standard basis vectors are an element of some $[e_i]$ for $i=1,\ldots ,j$.  Now, the process can be stopped.
    \item Rewrite the matrix $A_j(n)$ in the coordinate system $[ [e_1],\ldots, [e_j] ]$.  The order of the elements does matter.  Call this matrix $\widetilde{A}_j(n)$.  
\end{enumerate}
Note that this algorithm can be applied to any $A_j(n)$ and that $\widetilde{A}_j(n)$ will be a matrix of the form seen in \eqref{eq:AjnBlock}. This is illustrated by considering $A_{1j}(n)$, a matrix under the basis system $[e_1]$ along with the remaining basis elements, unaltered---excepting any perturbations that may have occurred by grouping $[e_1]$. In other words, the elements of $[e_1]$ come first and then the remaining elements follow after in the same order as they were originally. Thus, $A_{1j}(n)$ is matrix $A_j(n)$ after performing the above algorithm for only one of the potentially many basis groupings, and then changing $A_j(n)$ to representation under the new basis $[[e_1],\mbox{the rest}]$. We know $[e_1]$ is self-contained with respect to matrix multiplication by ordered $[e_1]$ elements against $A_{1j}(n)$---they will not leave the set or interfere with any other outside basis multiplications. Further and more generally, one notes that the only non-zero elements of matrices of the form $A_j(n)$ are at positions $(k,l)$, where $| k-l | = j$. By construction, both $e_k$ and $e_l$, where $k$ and $l$ satisfy the previous equality, will be in the same block $A_1(k_i)$ after the above algorithm and change of coordinates are employed. Looking at the alternate situation, positions $(k,l)$ of $A_j(n)$ where $| k-l | \ne j$, we have zero entries. This means that these basis elements do not interact via $A_j(n)$. Therefore, all elements resulting from interactions between basis vectors in different blocks of $\widetilde{A}_j(n)$ will be zero. We also know that within the grouping given by the algorithm, basis elements are shifted strictly to ``adjacent" elements under matrix multiplication. This suggests a matrix of the following form
\begin{align*}
A_{1j}(n)=\begin{pmatrix}
A_1(k_1) & O \\
O & B\\
\end{pmatrix}
\end{align*}
Where $A_1(k_1)$ is of the form \eqref{eq:Ajn} given above and $B$ represents the part of matrix $A_{1j}(n)$ not yet reordered---which may not exist at all in certain cases. The subsequent basis groups follow in the same manner, culminating in $\widetilde{A}_j(n)$. This completes the proof.
\end{proof}

We can now proceed to prove Lemmas \ref{lem:hardH0} and \ref{lem:offdiagonal}.

\begin{proof}[Proof of Lemma \ref{lem:offdiagonal}]
We see by Lemma \ref{lem:basis} that all matrices of the form $A_j(n)$ are similar---via a change of coordinates---to a matrix $\widetilde{A}_j(n)$ that can colloquially be described as a ``first off-diagonal matrix---with one entries---with potentially several one entries missing." We know via our work on the proof of Lemma \ref{lem:norm} that the norm of a matrix consisting of the two first off-diagonals is less than or equal to two. We can make this inequality strict by considering this ``almost" first off-diagonal matrix as a finite portion of an infinite dimensional lattice.

Consider the $n\times n$ matrix $A_1(n)$ consisting of the complete first off-diagonals.
$$
\begin{pmatrix}
0 & 1 & 0 & \cdots & 0 \\
1 & 0 & \ddots & \ddots & \vdots \\
0 & \ddots & \ddots & \ddots & 0 \\
\vdots & \ddots & \ddots & 0 & 1 \\
0 & \cdots & 0 & 1 & 0
\end{pmatrix}
$$
As the matrix is self-adjoint, we know that all eigenvalues are less than or equal to two---due to the bound on the norm given as part of our proof of Lemma \ref{lem:norm}. To make this inequality strict,\footnote{Actually, it is well known that all the $n$ eigenvalues of $A_1(n)$ can be computed explicitly using the $n$-th Chebyshev polynomial of the second kind, see e.g. \cite{CY,wiki}. The explicit expression shows that all the eigenvalues of  $A_1(n)$ are strictly in between $-2$ and $2$, for any $n$. Here, instead of using the explicit expression, we present a self-consistent proof for this fact.} we will contradict the following difference equation.
$$H_0 \vec{x} = 2 \vec{x}$$
Which is the same as
$$x_{k-1} + x_{k+1} = 2x_k, \ k = 1, \cdots,n$$
Where $\vec{x} = (x_1,\cdots,x_n)$ is part of an infinite system 
\begin{align}\label{eq:inf}
    x_{k-1} + x_{k+1} = 2x_k, \ k\in\Z
\end{align}
with a zero boundary condition so that $x_0 = x_{n+1} = 0$. We know that the only solutions to the difference equation above come from the fundamental set of solutions formed by the following two basis elements (i.e., these two solutions are the only ones we must check).
$$
\vec{\alpha} = \vec{1},
\ \
\vec{\gamma} = 
\begin{pmatrix}
1, \cdots, n
\end{pmatrix}
$$
Note that each of these solutions technically solves the infinite system identified in \eqref{eq:inf}, but we have used our zero boundary condition to examine a finite subsystem, specifically, one that arises from the $n \times n$ version of the above matrix. First, let us try solution $\vec{\alpha}$.
$$
\begin{pmatrix}
0 & 1 & 0 & \cdots & 0 \\
1 & 0 & \ddots & \ddots & \vdots \\
0 & \ddots & \ddots & \ddots & 0 \\
\vdots & \ddots & \ddots & 0 & 1 \\
0 & \cdots, & 0 & 1 & 0
\end{pmatrix}
\begin{pmatrix}
1 \\
\vdots \\
\vdots \\
\vdots \\
1
\end{pmatrix} =2 \begin{pmatrix}
1 \\
\vdots \\
\vdots \\
\vdots \\
1
\end{pmatrix}
$$
Comparing the first entries in the left and right hand side leads to the assertion that $1=2$.
Thus, we have a contradiction as only the trivial solution where we multiply $\vec{\alpha}$ by $0$ solves the above equation at the first place. Now, let us examine solution $\vec{\gamma}$.
$$
\begin{pmatrix}
0 & 1 & 0 & \cdots & 0 \\
1 & 0 & \ddots & \ddots & \vdots \\
0 & \ddots & \ddots & \ddots & 0 \\
\vdots & \ddots & \ddots & 0 & 1 \\
0 & \cdots, & 0 & 1 & 0
\end{pmatrix}
\begin{pmatrix}
1 \\
\vdots \\
\vdots \\
\vdots \\
n
\end{pmatrix} =
2
\begin{pmatrix}
1 \\
\vdots \\
\vdots \\
\vdots \\
n
\end{pmatrix}
$$
Examine the nth place
$$n - 1 = 2n.$$
Once again, we have a contradiction, as only $n = -1$ solves the equation at the nth place. This is impossible as the vector $\vec{\gamma}$ is structured so as to not allow this. Thus, our setup and assumptions in constructing the initial difference equation must have been incorrect due to contradiction, and $2$ cannot be an eigenvalue of $H_0$. The fact that $-2$ cannot be an eigenvalue follows from the same argument.

Now, the ``almost" first off-diagonal matrix that our reordered basis matrix $\widetilde{A}_j(n)$---provided by Lemma \ref{lem:basis}---can be shown by direct computation to have a norm less than or equal to the norm of a full first off-diagonal matrix, and thus this matrix, too, has a norm less than two. Further, $\widetilde{A}_j(n)=UA_j(n)U^T$ where $U$ is an orthogonal matrix whose columns form a permutation of the standard basis. Therefore, $\|\widetilde{A}_j(n)\|=\|A_j(n)\|$. We are thus given Lemma \ref{lem:offdiagonal}.
\end{proof}
\begin{proof}[Proof of Lemma \ref{lem:hardH0}]
We can show the desired result simply by applying Lemma \ref{lem:offdiagonal} to each off-diagonal set. Letting $A_j(n), \ j = 1,\cdots,n-1$ be given as in \eqref{eq:Ajn}, we have
\begin{eqnarray*}
    \|H_0\| & = & \|a_1 A_1(n) + \cdots + a_{n-1} A_{n-1}(n)\| \\
    \|H_0\| & \le & a_1\|A_1(n)\| + \cdots + a_{n-1}\| A_{n-1}(n)\|\\
    \|H_0\| & < & 2\sum_{j=1}^{n-1}a_j
\end{eqnarray*}
The second statement is obtained by the triangle inequality, the last is reached by applying Lemma \ref{lem:offdiagonal} to all $A_{j}(n)$ and rewriting the result as a summation.
\end{proof}

Lastly, we will complete our paper with the proof of Theorem \ref{thm:genLST2}:

\begin{proof}[Proof of Theorem \ref{thm:genLST2}]
We simply need to show that our matrix satisfies Lemma \ref{lem:power}. Invertibility of $H$ and positivity of the inverse and
Landscape Function will follow from there via Lemmas \ref{lem:inverse} and \ref{lem:posi-Green}, which are---as in Theorem \ref{thm:genLST}---completely proven by Lemma \ref{lem:power}. To show the desired result, we note that, given the conditions in Theorem \ref{thm:genLST2}
\begin{align*}
    \|V^{-1}\|\le \max_{1\le j\le n} v_j^{-1} \le \frac{1}{2\sum_{i-1}^{n-1}a_i}
\end{align*}
and, as shown by Lemma \ref{lem:hardH0}
\begin{align*}
    \|H_0\| < 2\sum_{i=1}^{n-1}a_i
\end{align*}
Thus, we see that
\begin{align*}
    \|V^{-1}H_0\| \le \|V^{-1}\| \|H_0\| < \frac{2\sum_{i-1}^{n-1}a_i}{2\sum_{i-1}^{n-1}a_i} = 1
\end{align*}
Hence, we meet the conditions to satisfy Lemma \ref{lem:power} and have invertibility and positivity of our matrix $H$ under our extended Theorem.
\end{proof}

\section*{Conclusion}
We have extended the findings of Lyra, Mayboroda, and Filoche \cite{LMF15} by proving the invertibility and positivity of a more general ``Long Range" Schr\"odinger Matrix. These findings may have physical applications because they extend the Landscape Theory to other currently in-use lattice operator models (such as the next nearest neighbor hopping extended Harper model). For further applications and study, curious readers are directed to the references section.

Our method of proof employed the power series expansion of matrices, with said method naturally extending the original results. We observe that none of our methods use any properties inherent to one dimensional spaces, suggesting that others can easily extend the findings of Landscape Theory to higher dimensional lattices and operators in the same manner.

\section*{Acknowledgements}
We would like to thank the \textit{Pi Mu Epsilon Journal} for their consideration of our submission. In addition, we would like to thank the anonymous referee for several suggestions which allowed us to improve our paper.

Further, we would like to thank Dr. Ilya Kachkovskiy and Dr. Shiwen Zhang for their guidance throughout this paper. This project was completed as a part of the Discovering America Exchange Research Program, and we would also like to thank Dr. Jeanne Wald for her work in organizing this program. The Discovering America Program helped bring together not only mathematical ideas, but people as well, and we are grateful for the opportunity to participate. The research was partially supported by NSF DMS-1758326 and DMS-1846114 grants.

\section*{About the Authors}
All students attend or have attended Michigan State University, East Lansing, Michigan, 48824. John Buhl, Yikang Li, and Xingyan Liu did not wish to provide full bios, but can be contacted at (\href{mailto:buhljhon@msu.edu}{buhljhon@msu.edu}), (\href{mailto:liyikang@msu.edu}{liyikang@msu.edu}), and (\href{mailto:cincy_leo515@163.com}{cincy\_leo515@163.com}), respectively.
\subsection*{Isaac Cinzori}
Isaac is an avid mathematics and history double major in his sophomore year at Michigan State University. After attaining his Advanced Track Mathematics degree, he hopes to use his current and future research experience while employed as a professor of mathematics. In his spare time, he enjoys reading, biking, hiking, and playing games. (\href{mailto:cinzorii@msu.edu}{cinzorii@msu.edu})
\subsection*{Isabella Ginnett}
Isabella is a physics and mathematics double major in her sophomore year at Michigan State University.  After she completes her undergraduate degree, she hopes to use her mathematical skills and research experience to obtain a PhD and pursue a career of research in high energy physics.  Outside of mathematics and physics, she enjoys to play soccer, travel, and watch movies. (\href{mailto:ginnetti@msu.edu}{ginnetti@msu.edu}) 
\subsection*{Mark Landry}
Mark is a mathematics and statistics double major in his junior year at Michigan State University. His mathematical interests are focused on probability theory, which he plans to continue to research in graduate school while obtaining his PhD. His hobbies outside of math largely include sports, mainly ice hockey, and spending time with friends. (\href{mailto:landrym5@msu.edu}{landrym5@msu.edu}) 

\begin{thebibliography}{}
\bibitem{A58} \textsc{P. W. Anderson}, \emph{Absence of diffusion in certain random lattices}, Physical Review, 109 (1958), pp. 1492--1505.

\bibitem{ADFJM} \textsc{D. N. Arnold}, \textsc{G. David}, \textsc{M. Filoche}, \textsc{D. Jerison}, and \textsc{S. Mayboroda}, \emph{Localization of eigenfunctions via an effective potential}, to appear in  Communications in Partial Differential Equations.

\bibitem{AJM}\textsc{A. Avila}, \textsc{S. Jitomirskaya}, and \textsc{C. Marx}, \emph{Spectral theory of extended Harper’s model and a question by Erd\"{o}s and Szekeres}, Invent. Math. 210 (2017), no. 1, 283--339.

\bibitem{CY} \textsc{F. Chung} and \textsc{S.-T. Yau}, \emph{Discrete Green's Functions}, Journal of Combinatorial Theory A 91, 191--214 (2000).

\bibitem{FM12} \textsc{M. Filoche} and \textsc{S. Mayboroda},\emph{Universal mechanism for Anderson and
weak localization}, Proc. Natl. Acad. Sci. USA 109 (2012), no. 37,
14761--14766.

\bibitem{F+17}
\textsc{M. Filoche}, \textsc{M. Piccardo}, \textsc{Y. Wu}, \textsc{C. Li}, \textsc{C. Weisbuch}, and \textsc{S. Mayboroda}, \emph{ Localization landscape theory of disorder in semiconductors I: Theory and modeling}, Phys. Rev. B 95, 144204, 2017.

\bibitem{LMF15} \textsc{M. L. Lyra}, \textsc{S. Mayboroda} and \textsc{M. Filoche}, \emph{Dual Landscapes in Anderson Localization on Discrete Lattices}, IOPscience and EPLjournal, vol. 109, no. 4, 2015, pp. 1--6.

\bibitem{L+16} \textsc{Lefebvre}, \textsc{A.Gondel}, \textsc{M. Dubois}, \textsc{M. Atlan}, \textsc{F. Feppon}, \textsc{A. Labb\'e}, \textsc{C. Gillot}, \textsc{A. Garelli}, \textsc{M. Ernoult}, \textsc{S. Mayboroda}, \textsc{M. Filoche}, and \textsc{P. Sebbah},  \emph{One single static measurement predicts wave localization in complex structures}, to appear in Phys. Rev. Lett..

\bibitem{S17}
\textsc{S. Steinerberger}, \emph{Localization of quantum states and landscape functions},
Proc. Amer. Math. Soc. 145 (2017), no. 7, 2895--2907.

\bibitem{WZ}
\textsc{W. Wang, S. Zhang}, \emph{The exponential decay of eigenfunctions for tight binding Hamiltonians via landscape and dual landscape functions}, in preparation.


\bibitem{wiki} \emph{Eigenvalues and eigenvectors of the second derivative}, Wikipedia.

\end{thebibliography}
\end{document}